\newtheorem{definition}{Definition}
\newtheorem{theorem}{Theorem}
\newtheorem{lemma}{Lemma}
\begin{document}

\title{Achieving Multiple Goals via Voluntary Efforts and Motivation Asymmetry}
\author{Eckart Bindewald}
\affiliation{Department of Mathematics, \mbox{Frederick Community College}, Frederick, MD 21702} 

\begin{abstract} The achievement of common goals through voluntary efforts of members of a group can be challenged by the high temptation of individual defection. Here, two-person one-goal assurance games are generalized to $N$-person, $M$-goal achievement games in which group members can have different motivations with respect to the achievement of the different goals. The theoretical performance of groups faced with the challenge of multiple simultaneous goals is analyzed mathematically and computationally. For two-goal scenarios one finds that ``polarized'' as well as ``biased'' groups perform well in the presence of defectors. A special case, called individual purpose games ($N$-person, $N$-goal achievements games where there is a one-to-one mapping between actors and goals for which they have a high achievement motivation) is analyzed in more detail in form of the ``importance of being different theorem''. It is shown that in some individual purpose games, groups of size $N$ can successfully accomplish $N$ goals, such that each group member is highly motivated towards the achievement of one unique goal. The game-theoretic results suggest that multiple goals as well as differences in motivations can, in some cases, correspond to highly effective groups. \end{abstract}

\maketitle

Keywords: stag hunt; goal; achievement game; game theory; coordination game; anti-coordination; importance of being different.

\section{Introduction}

Reaching goals is a key ability of intelligent agents. Reaching a goal in a way that needs the contribution of several agents can be modeled as a game: An assurance game is a game-theoretic model, in which members of a group can choose to spend individual efforts or resources for the achievement of a common goal \citep{sen1969game}. The choice of exerting an effort towards a goal has a cost (a negative utility), the achievement of the goal has a benefit (positive utility) for each group member. The original formulation of the assurance game corresponds to two agents, one goal and two choices per member of contributing a high effort or a low effort towards that goal. Other names for this class of games are coordination game, trust dilemma or stag hunt (based on a hypothetical scenario proposed by the philosopher Jean-Jacques Rousseau in which two hunters can choose to hunt a stag corresponding to a large payoff or a hare corresponding to a small payoff; the catch is that successfully hunting the stag needs both hunters to choose that option) \citep{mcadams2008beyond,skyrms2004stag}.

Such situations can be analyzed within the framework of non-cooperative game theory, where all participants (interchangeably referred as players, actors, agents or persons) can choose between different actions and strategies in order to maximize their expected outcome. The key concept is that of a (Nash) equilibrium, where no player can improve unilaterally the outcome by changing the strategy \citep{nash1951non}.

The classic assurance game has three Nash equilibrium points: two pure-strategy equilibria corresponding to mutual cooperation and to mutual defection as well as one mixed-strategy equilibrium point in which both agents choose between cooperation and defection with a probability $1/2$.

Milinski \textit{et al.} studied experimentally iterated assurance games performed by groups with six members, each of which has fine-grained donation options in order to potentially obtain a common-pool reward provided the combined donations of the group are at least as high as a certain threshold \citep{Milinski2008}. Also, it has been noted that the outcome of collective action challenges can depend not only on the rewards but also on the structure of communication networks \citep{chwe2000communication}.

The games typically analyzed by game-theoretic analysis are dealing with scenarios to reach one particular goal \citep{mccain2010game}. Groups are, however, frequently faced with multiple simultaneous challenges: families have the challenge of raising children and earning money; societies have the challenge of helping those in need, while simultaneously protecting its members from threats. This can be viewed as challenges similar to multi-attribute negotiations, where participants have different motivations with respect to different objectives \citep{lai2004literature}. Multi-attribute game theory has been applied to auctions \citep{bichler2000experimental}, border security patrolling \citep{aguirre2011towards} and supply chain network negotiations \citep{yu2013coalition}.

Because the different choices for the differently motivated participant quickly leads to a ``combinatorial explosion'' of possibilities, the tractability of non-cooperative games can be an issue, leading researchers to, for example, ``issue-by-issue'' analysis approaches of multi-attribute games \citep{lai2004literature}.

Here we extend assurance games to a general achievement game that allows for several goals and multiple strategy options per goal for each each agent. No requirement is made, that the utility of each agent with respect to each goal is identical (in other words symmetry is not required), nor is required that the agents agree on a common strategy (in other words a non-cooperative game-theoretic model is used). This is, to the best of the author's knowledge, a first published description of a multi-goal assurance game.

The \textit{importance of being different} theorem is presented, that states that a group of $N$ agents faced with achieving $N$ goals, and individual motivations that are such that each agent is uniquely motivated to spend the effort to solve one particular goal leads to one unique Nash equilibrium point that corresponds to a situation in which all goals are achieved. These theoretical results are augmented by computer results corresponding to group sizes of 2 to 5 faced with the challenge of achieving 1, 2 or 3 goals are presented. Both the theoretical and the computer results indicate that multiple goals and motivation asymmetry can facilitate the achievement of goals without the requirement for iteration or other mechanisms such as reciprocity. 

\section{The Multi-Goal Achievement Game}
Let there be a scenario in which a group of $N$ agents is faced with the challenge of achieving $M$ different goals. A formal definition of an $N$-agent, $M$-goal achievement game is presented below; note that the function $\Theta: \mathbb{R} \rightarrow \{0,1\} $ stands for a variant of the Heaviside step function: $\Theta(x) = 1$, if $x \geq 0$ and $\Theta(x) = 0$, if $x < 0$.

\begin{definition}{Multi-goal achievement game.}
Let there be a set of $N \in \mathbb{N}$ different agents and a set of $M \in \mathbb{N}$ different goals and a number $K \in \mathbb{N}, K > 1$. Each agent $i$ can choose between contributing an element of cost set $C = \{c_k | k \in \{1,\ldots,K\}, c_k \in \mathbb{R}, c_k \geq 0, c_{k_1} < c_{k_2} \iff k_1 < k_2\}$ towards any of the $M$ different goals. The chosen contribution of agent $i$ towards goal $j$ is denoted as $d_{ij} \in C$. Let $D$ be the $N \times M$ matrix consisting of the elements $d_{ij}$.  Let there be an $M$-tuple of positive goal thresholds $T=(g_1, g_2, \ldots, g_M) \in \mathbb{R}^M$. We say goal $j$ is achieved, if and only if $\sum\limits_{i=1}^N d_{ij} \geq g_j$. Let the utility of agent $i$ be the negative of the sum of effort units spent by agent $i$ plus a weighted sum of achieved goals, in other words $u_i(D) = - \sum\limits_{j=1}^{M}d_{ij} + \sum\limits_{j=1}^M w_{ij}\Theta(\sum\limits_{k=1}^{N} d_{kj} - g_j)$ with $w_{ij} \in \mathbb{R}$. Let $W$ be the $N \times M$ matrix consisting of the elements $w_{ij}$. We call the matrix $W$ the motivation matrix of the game. We call the finite, non-iterative $N$-player game $G(N, M, C, T, W)$ an $M$-\textup{goal achievement game} or an $N$-\textup{player}, $M$-\textup{goal}, $K$-\textup{choice achievement game}. If $M > 1$, we call the game a \textup{multi-goal achievement game}, otherwise a \textup{single-goal achievement game}. 
\end{definition}

Note that the reward of an agent $i$ to achieve a particular goal $j$ (represented by motivation matrix elements $w_{ij}$) can consist of a material reward or a subjective motivation or combinations thereof. The motivation matrix elements can nonetheless be measured in currency units (not in terms of financial rewards but in terms of the willingness to pay for achieving a certain goal).

\begin{lemma}
For an $N$-player, $M$-agent, $K$-choice achievement game, there are $K^{MN}$ different combinations of strategies.
\end{lemma}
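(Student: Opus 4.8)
The plan is to count the total number of strategy profiles by a straightforward product rule, since the game is non-cooperative and each agent chooses independently. First I would identify what constitutes a complete strategy for a single agent: by the definition, agent $i$ must choose a contribution $d_{ij} \in C$ toward each of the $M$ goals, and since the cost set $C$ has exactly $K$ elements (indexed by $k \in \{1,\ldots,K\}$), each individual choice $d_{ij}$ can take $K$ distinct values.

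Next I would count the number of distinct strategies available to a single agent. A strategy for agent $i$ is the tuple $(d_{i1}, d_{i2}, \ldots, d_{iM})$, i.e.\ one independent selection from $C$ for each of the $M$ goals. By the multiplication principle, the number of such tuples is $K^M$. I would then observe that the full strategy profile of the game is the matrix $D$, which is equivalently a choice of one strategy for each of the $N$ agents. Because the agents choose independently (each row of $D$ is selected without constraint from the others), I would again apply the multiplication principle across the $N$ agents to obtain $(K^M)^N = K^{MN}$ total combinations of strategies, which is the claimed count.

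The main thing to be careful about, rather than a genuine obstacle, is justifying that all choices are genuinely independent and unconstrained, so that the product rule applies cleanly. I would note that nothing in the definition restricts the joint choices: the cost set $C$, the thresholds $T$, and the motivation matrix $W$ are fixed parameters of the game, and the achievement condition $\sum_i d_{ij} \geq g_j$ only affects the \emph{utility} $u_i(D)$ rather than the \emph{feasibility} of any profile. Hence every one of the $K^{MN}$ matrices $D$ with entries in $C$ is a legitimate strategy profile, and the count is exact. I anticipate the proof to be short: essentially a one-line application of the rule of product once the per-agent count $K^M$ is established.
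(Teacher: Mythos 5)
Your proposal is correct and follows essentially the same argument as the paper: count $K^M$ strategies per agent via the product rule over goals, then raise to the $N$th power over independent agents to get $K^{MN}$. The extra remark that the achievement condition constrains only utilities, not feasibility, is a reasonable (if implicit in the paper) clarification.
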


\begin{proof}
Each player has $K$ choices for each of the $M$ different goals. Each player has thus overall $K^M$ different strategies to choose from. Because there are $N$ different players who can choose their strategy independently, there are  $(K^M)^N=K^{MN}$ different combinations of strategies.
\end{proof}

Agents may be motivated to contribute towards achieving certain goals, but it will not make sense for them to pay more than is needed to achieve the goals that are important to them:

\begin{lemma}
For any agent $i$ and any goal $j$, any strategy $a$ of spending $c_k$ effort units (with $c_k > g_j)$ on goal $j$ is strictly dominated by the strategy $a'$ of spending $c_{k'}$ effort units on goal $j$ (and unchanged effort units on all other goals $j' \neq j$) if $\exists k' \in \{1,\ldots,k-1\}$ such that $g_j \leq c_{k'} < c_k$.
\end{lemma}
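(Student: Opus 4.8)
The plan is to compute directly the utility difference between the two strategies under an arbitrary but fixed choice of the remaining players' contributions, and to show that this difference is strictly positive. Recall that for $a'$ to strictly dominate $a$ we need $u_i$ evaluated at $a'$ to exceed $u_i$ evaluated at $a$ for \emph{every} profile of the other agents' strategies. So I would begin by fixing such a profile and introducing the abbreviation $S = \sum_{l \neq i} d_{lj}$ for the total contribution of all agents other than $i$ towards goal $j$. Since every contribution is an element of the cost set $C$ and hence non-negative, we have $S \geq 0$.

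The key observation is that the hypothesis $g_j \leq c_{k'} < c_k$ forces goal $j$ to be achieved under \emph{both} strategies, irrespective of the other players' choices. Indeed, under $a'$ the total contribution to goal $j$ is $S + c_{k'} \geq 0 + g_j = g_j$, while under $a$ it is $S + c_k > S + c_{k'} \geq g_j$; in both cases the argument of the corresponding Heaviside term is non-negative, so $\Theta(\cdot) = 1$ for goal $j$ in both profiles. For every other goal $j' \neq j$ the contribution of agent $i$ is unchanged by assumption and the contributions of all other agents are held fixed, so the column sums, and therefore the Heaviside terms, coincide for $a$ and $a'$.

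It then follows that the entire reward term $\sum_{j'} w_{ij'}\Theta(\cdot)$ is identical for the two strategies, so the utility difference collapses to the difference of the cost terms. Since $a$ and $a'$ differ only in the contribution towards goal $j$ (namely $c_k$ versus $c_{k'}$), I would conclude $u_i(a') - u_i(a) = c_k - c_{k'} > 0$. As this inequality holds for every profile of the other agents, strict domination is established.

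I do not expect a genuine obstacle here; the only point that requires care is verifying that the lower bound $g_j \leq c_{k'}$, and not merely $c_{k'} < c_k$, is precisely what guarantees that goal $j$ stays achieved after the downward switch for all possible values of $S$. Without that lower bound the reward for goal $j$ could vanish for profiles in which $S$ is small, and strict domination would fail; hence the main thing to emphasize is that this hypothesis, together with $S \geq 0$, is exactly what makes the comparison uniform over the opponents' strategies.
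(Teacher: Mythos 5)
Your proof is correct and follows essentially the same route as the paper's: both arguments observe that $c_{k'} \geq g_j$ keeps goal $j$ achieved under either strategy regardless of the opponents' contributions, so the reward terms cancel and the utility difference reduces to $c_k - c_{k'} > 0$. Your version is slightly more explicit in quantifying over the other agents' profiles and in isolating why the lower bound $g_j \leq c_{k'}$ is needed, but the substance is identical.
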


\begin{proof}
Strategy $a'$ strictly dominates strategy $a$, because there is no difference in goal achievement for any goal (goal $j$ is achieved in either strategy) and strategy $a$ corresponds to spending $c_k-c_{k'} > 0$ more effort units compared to strategy $a'$. The difference in utility is $u(a')-u(a)=-c_{k'} - (-c_k) = c_k - c_{k'} > 0$. It follows that $u(a')>u(a)$, in other words the utility of strategy $a'$ is greater than the utility of strategy $a$.
\end{proof}

In this paper, a focus is on the interesting special case of multi-goal achievement games where the number of agents is equal to the number of goals with the additional provision that there is a one-to-one relationship between agents and the goals they are highly motivated to achieve. We call such games \textit{individual purpose games}:

\begin{definition}{Individual purpose game.}
Let there be an achievement game $G$ where the number of goals $M$ is equal to the number of agents $N$. Let $G$ be also such that the contribution $w_{ij}$ to the utility of agent $i$ for achieving goal $j$ is greater than the goal-achievement cost $g_j$ for $i = j$ and lower compared to $g_j$ for $i \neq j$. We call an achievement game with such properties an \textup{individual purpose game}. 
\end{definition}

We denote an individual purpose game as $G(N,C,T,W)$ where $C$ is a set of contribution choices that each agent can pay towards each goal, $T$ is an $M$-tuple of goal-thresholds and $W$ is an $N \times N$ matrix of goal-achievement rewards to agent's utility functions. This definition encompasses cases where the difficulty to achieve the different goals varies widely. We call the special case where the goal-achievement thresholds are equal between all goals an \textit{even individual purpose game}:

\begin{definition}{Even individual purpose game.}
Let there be an individual purpose game $G(N,C,T,W)$. If the goal threshold $N$-tuple $T$ is of the form $T=(g,g,\ldots,g)\in \mathbb{R}^N$, we call the game $G$ an \textup{even} individual purpose game. We call the value of $g$ the \textup{universal goal threshold} of the game.
\end{definition}

In some cases, the motivations of the agents towards their non-favorite goals are low; we define such occurrence as extreme individual purpose game:

\begin{definition}{Extreme individual purpose game.}
We say an individual purpose game is \textup{extreme} if and only if the motivation matrix elements $w_{ij}$ are such that $w_{ij}$ is lower compared to the cost difference between the second-lowest and the lowest-cost action for all agents $i$ and goals $j$ with $i \neq j$.
\end{definition}

Intuitively, we may suspect that agents in an individual purpose game tend to contribute more towards the goals for which they have a high motivation. Indeed, we are able to show this formally for a special kind of extreme even individual purpose game in form of the importance of being different theorem. It turns out that the only equilibrium solution is such that agents contribute substantially towards the one goal that is most important to them and nothing to the goals that are not important to them. While this strategy is highly asymmetric (a case of anti-coordination), it has the property that all goals are achieved:

\begin{theorem}{Importance of being different theorem.}
Let there be an extreme even individual purpose game $G$ where the lowest contribution $c_1$ to a goal is zero and the highest contribution $c_K$ is equal to the universal goal threshold $g$ of the game. The game $G$ has one and only one Nash equilibrium in which each agent $i$ chooses to spend $g$ effort units towards goal $i$ and zero effort units towards all other goals ($d_{ij} = g$ if $i=j$ and $d_{ij}=0$ if $i \neq j$). This equilibrium is a pure-strategy equilibrium in which all goals are achieved.
\end{theorem}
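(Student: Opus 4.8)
The plan is to prove the two halves of the statement separately: first that the prescribed profile ($d_{ii}=g$ and $d_{ij}=0$ for $i\neq j$) is a Nash equilibrium, and then that it is the only one, treating pure and mixed equilibria together. For existence I would fix the candidate profile and check that no agent has a profitable unilateral deviation. The key observation is that, under this profile, every goal $j\neq i$ is already achieved by agent $j$'s own contribution of $g$, so agent $i$ cannot change the achievement status of any foreign goal and should contribute $0$ to each of them, since any positive contribution only adds cost. For goal $i$, since $c_K=g$ is the largest available contribution and all other agents contribute nothing to goal $i$, the goal is achieved only if agent $i$ plays exactly $g$; comparing the payoff $w_{ii}-g$ of doing so against the payoff $-c_k\le 0$ of any cheaper (hence unachieving) contribution, and using $w_{ii}>g$, shows $d_{ii}=g$ is the strict best response. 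Thus the candidate is a strict pure-strategy equilibrium, and by construction all goals are achieved.

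For uniqueness I would argue by iterated elimination of strictly dominated strategies, which also disposes of mixed equilibria, since a strictly dominated pure strategy receives zero weight in every Nash equilibrium and iterated strict dominance leaves the equilibrium set unchanged. In the first round I claim that, for any agent $i$ and any goal $j\neq i$, the strategy placing a positive contribution on goal $j$ is strictly dominated by the otherwise-identical strategy placing $0$ there. The comparison is pointwise in the opponents' choices: zeroing out that contribution saves at least $c_2$ in cost (recall $c_1=0$, so the smallest positive contribution is $c_2$), while the only reward that can be lost is $w_{ij}$, and the extreme hypothesis gives $w_{ij}<c_2-c_1=c_2$. Hence the cost saving strictly exceeds any possible reward loss for \emph{every} opponent profile, eliminating all off-diagonal contributions. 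In the second round, with every agent now restricted to the diagonal, goal $i$ is achieved if and only if agent $i$ plays $d_{ii}=g$; comparing $w_{ii}-g$ against $0$ and using $w_{ii}>g$ shows $d_{ii}=g$ strictly dominates every smaller contribution. Only the candidate profile survives, so it is the unique Nash equilibrium.

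I expect the main obstacle to be the uniqueness claim against mixed strategies rather than the best-response verification. The delicate points are (i) phrasing the first-round comparison so that it holds for all opponent profiles simultaneously, which is precisely why strict rather than merely weak dominance is available and where the extreme hypothesis $w_{ij}<c_2-c_1$ is essential, and (ii) invoking the standard fact that iterated strict dominance preserves the Nash-equilibrium set, so a single surviving profile is the unique equilibrium, mixed ones included. A minor point to state carefully is that reducing a contribution can only flip a goal from achieved to unachieved and never the reverse, so the reward term changes by at most $+w_{ij}$, which is what keeps the dominance inequality one-directional.
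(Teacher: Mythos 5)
Your proposal is correct and follows essentially the same route as the paper: a two-round iterated elimination of strictly dominated strategies, first killing off-diagonal positive contributions via the extreme hypothesis $w_{ij}<c_2-c_1$, then forcing $d_{ii}=g$ via $w_{ii}>g$. Your additions (the separate existence check and the explicit appeal to the fact that iterated strict dominance preserves the Nash-equilibrium set, which covers mixed equilibria) are sound refinements of the same argument rather than a different approach.
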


\begin{proof}

We will show that the described choice of strategies is the only Nash equilibrium of the game by showing that for each agent, this strategy strictly dominates all alternative strategies. We will do that by iteratively eliminating dominated strategies.

For a strategy, in which agent $i$ spends an effort greater than $c_1$ on a goal $j$, $i \neq j$, that agent can, independent of the choices of the other agents, increase its utility by instead spending $c_1$ effort units on goals $j$ because it decreases the spent effort units by at least $c_2-c_1$, does not change the accomplishments of goal $i$, and decreases the utility by a value less than $c_2-c_1$ due to the potential non-achievement of goal $j$ (because $w_{ij} < c_2-c_1$ for $i \neq j$). In other words, any strategy for which $d_{ij} > c_1$ for $i \neq j$ is strictly dominated by strategies for which $d_{ij} = c_1 = 0$ for $i \neq j$ and does not need to be considered further.

A strategy $a$ for which $0 \leq d_{ii} < g$, the utility of agent $i$ is $u_i(a) = -d_{ii} + x + y$, with $x = w_{ii}\Theta(\sum\limits_{k=1}^{N} d_{ki} - g)$ and $y = \sum\limits_{j=1, j\neq i}^M w_{ij}\Theta(\sum\limits_{k=1}^{N} d_{kj} - g)$. Because $d_{ij}=0$ for $i \neq j$, one obtains $\sum\limits_{k=1}^{N} d_{ki}=d_{ii}$ and $\sum\limits_{k=1}^{N} d_{kj} = d_{jj}$. It follows that $x = w_{ii}\Theta(d_{ii} - g)$ and $y = \sum\limits_{j=1, j\neq i}^M w_{ij}\Theta(d_{jj} - g)$. Because $d_{ii} < g$, it follows that goal $i$ is not achieved and $x = 0$, thus $u_i(a) = -d_{ii} + y \leq y$. If the effort $d_{ii}$ spent by agent $i$ on goal $i$ is, on the other hand, equal to $d_{ii}=g$, the utility for agent $i$ is $u_i(a') = -d_{ii} + w_{ii}\Theta(d_{ii} - g) + y = -g + w_{ii} + y$. Because $w_{ii} > g$ it follows that $u_i(a') > y \geq u_i(a)$.  In other words, the choice of agent $i $ spending $g$ effort units on goal $i$ strictly dominates the choice of spending less than $g$ effort on goal $i$. This leaves for each agent $i$ exactly one strictly dominating strategy of spending $g$ effort units on goal $i$ and zero effort units on all goals other than goal $i$. From this it follows that the strategy profile of each agent $i$ spending $g$ effort units on goal $i$ and zero effort units on all goals other than goal $i$ corresponds to the one and only one Nash equilibrium of the game. The identified Nash equilibrium is a pure-strategy equilibrium. Because $g$ effort units are spent on each goal $i$, all goals are achieved.
\end{proof}

A scenario that illustrates this situation in which a group of $N$ children is asked to feed and groom $N$ pet animals. If one child chooses to not put in the required effort, the other children might be tempted to also stop spending the effort to contribute to the upkeep of the group of pet animals. If, on the other hand, a child is ``in love'' with a pet animal, it will keep up the maintenance of that animal no matter how the other children are acting. If each child is ``in love'' with one unique animal, all $N$ animals are being cared for such that each child takes care of the pet animal it is ``in love'' with, while not contributing to the maintenance of the other pet animals.

\section{The Computational Approach}

Variants of goal achievement games are used for the computational results, in which each agent has for each goal the choice of contributing $0, \frac{1}{M}$ or $1$ effort units (payments).  These three choices are called defection, cooperation and heroic effort respectively. A specific goal is achieved, if the total payments towards that goal is at least $N/M$. To achieve all goals, the sum of all payments of the agents has thus to be at least $N$ units. The goal-specific utility of agent $i$ with respect to goal $j$ is the negative of the payment that the agent has performed towards goal $j$, plus a ``motivation'' term $w_{ij}$ that is added to the utility of an agent provided that goal $j$ is achieved. The utility of an agent is the sum of its goal-specific utilities. The motivations of the $N$ agents towards achieving the $M$ goals is defined through the $N \times M$ motivation matrix $W$. The matrix elements are for our numerical analysis set equal to one of the costs of the three possible choices plus a small excess motivation term $\delta$ (set to 0.25 utility units). These scenarios correspond to generalized achievement games and are analyzed using a game-theoretic approach.

Four different performance scores are defined that measure who well a group is able to achieve goals. These four scores are:
\begin{itemize}
\item The mean-goal-achievement score (MGA) is the mean of achieved goals, averaged over the different Nash equilibria of the achievement game.
\item The all-goal-achievement score (ALL) score for one Nash equilibrium of a achievement game is equal to 1 if all goals are achieved and zero otherwise. The ALL score of a game is the average over the ALL scores of all equilibrium points.
\item The defection-robustness score (DD) is the average of the MGA scores over all possible scenarios, in which exactly one agent is replaced by an agent with low motivation (set to $\delta$) towards all goals. This score measures the robustness of goal achievement under the challenge of additional defectors.
\item Variable-load score (VL) is the MGA score of a achievement game average over all scenarios in which the threshold to achieve one goal is instead of $n$ set to either $n+1$ or $n-1$ (with $n$ being equal to the number of agents). This score measures the robustness in goal achievement with respect to the challenge of a variable difficulty in achieving each goal.
\end{itemize}

We define a measure of group ``polarization'': The \textit{divergence} of two agents is the squared Euclidian norm of the difference of their motivation $M$-tuples divided by the number of goals. The divergence $V$ of a group is defined as the maximum divergence of any pair of its agents: 

$V = \frac{1}{M}\max\limits_{i,j \in \{1,\ldots,N\}} \sum_{k=1}^{M}{(w_{ik}-w_{jk})^2}$

Computational results were generated examining numerically identified pure-strategy Nash-equilibrium points of achievement games and iterating over different types of groups and number of goals. The analysis of an $N$-player $M$-goal achievement game has been implemented in the Java programming language. The computer program performs the creation of different group scenarios, the computation of utility matrices, the identification of pure-strategy Nash equilibiria and the computation of the different scores. Mixed-strategy equilibria are not considered. Group sizes of 2 to 5 members faced with the challenge of achieving one, two or three goals have been analyzed. The case of a group with five members faced with the achievement of three goals has not been analyzed numerically, because of its large computational cost of iterating over all possible group member motivations.

\section{Results}

\begin{figure}
\includegraphics[width=129mm]{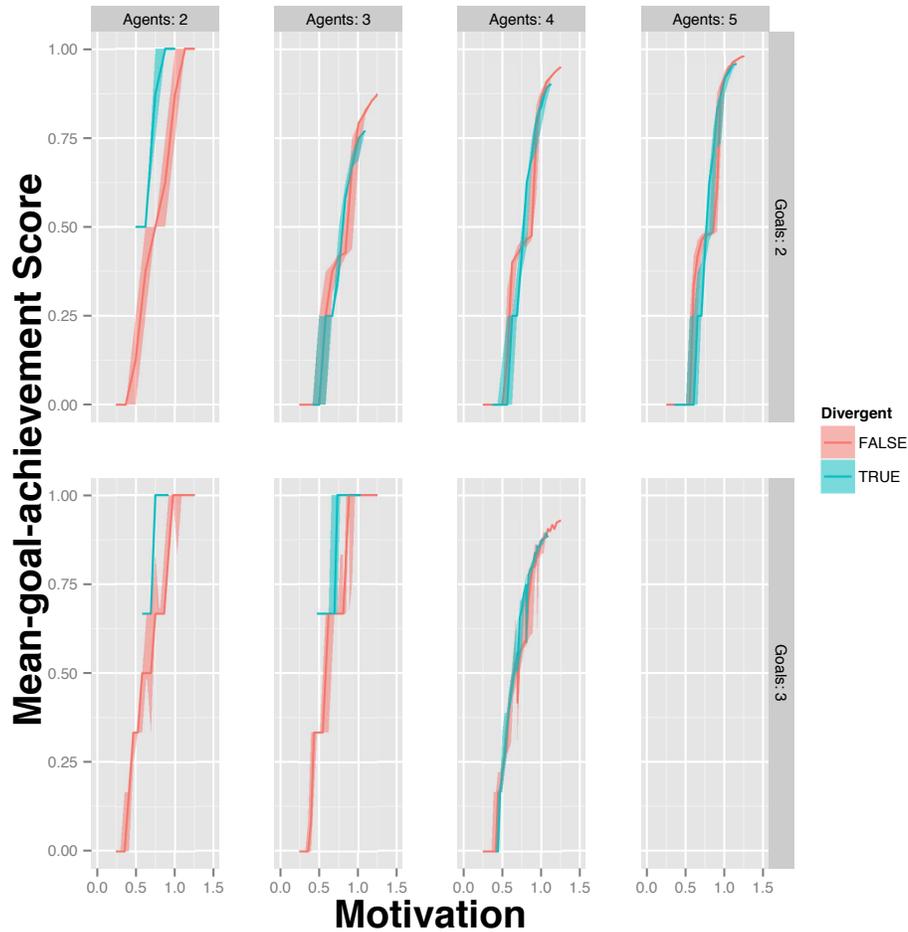}
\caption{Shown is the mean-goal-achievement (MGA) score as a function of mean motivation for groups consisting of two to five members faced with the achievement of two or three goals. The groups have been split in ``divergent'' and ``non-divergent'' groups, corresponding to whether they have a high or low maximum difference in goal priorities between any two group members. The mean-goal-achievement score is the fraction of goals that are achieved averaged over the pure-strategy Nash equilibria. A corresponding scatter plot that includes the case of one goal is shown in Figure 3.}
\end{figure}

The motivation of an agent with respect to one goal is defined here as being equal to the maximum effort that the agent would be willing to spend in order to reach that goal. Motivation is thus a measurable ``currency'' for voluntary efforts. One can define for each agent the sum of all motivations  towards all goals. The sum the total motivations leads to the total motivation of a group, a measure of it's capability to achieve goals. Dividing that number by the number of group members leads to a quantity that makes groups of different sizes comparable (called here mean motivation). Figure 1 depicts the mean-goal-achievement score as a function of a group's mean motivation for different number of agents and number of goals. As expected, the mean-goal-achievement score tends to be higher for groups with higher mean motivation. The groups are considered ``divergent'' or ``not-divergent'', depending on the size their maximum difference in goal priorities. One can see that for cases in which the number of agents is equal to the number of goals (2 and 3), the divergent groups tend to outperform the non-divergent groups.

Differences in goal achievement between divergent and non-divergent groups are depicted in Figure 2. Figures 1 and 2 demonstrate that for the case of 2 goals and more than 2 agents, one can identify regions, where divergent groups tend to do better, worse, or similar to non-divergent groups. 

\begin{figure}
\includegraphics[width=129mm]{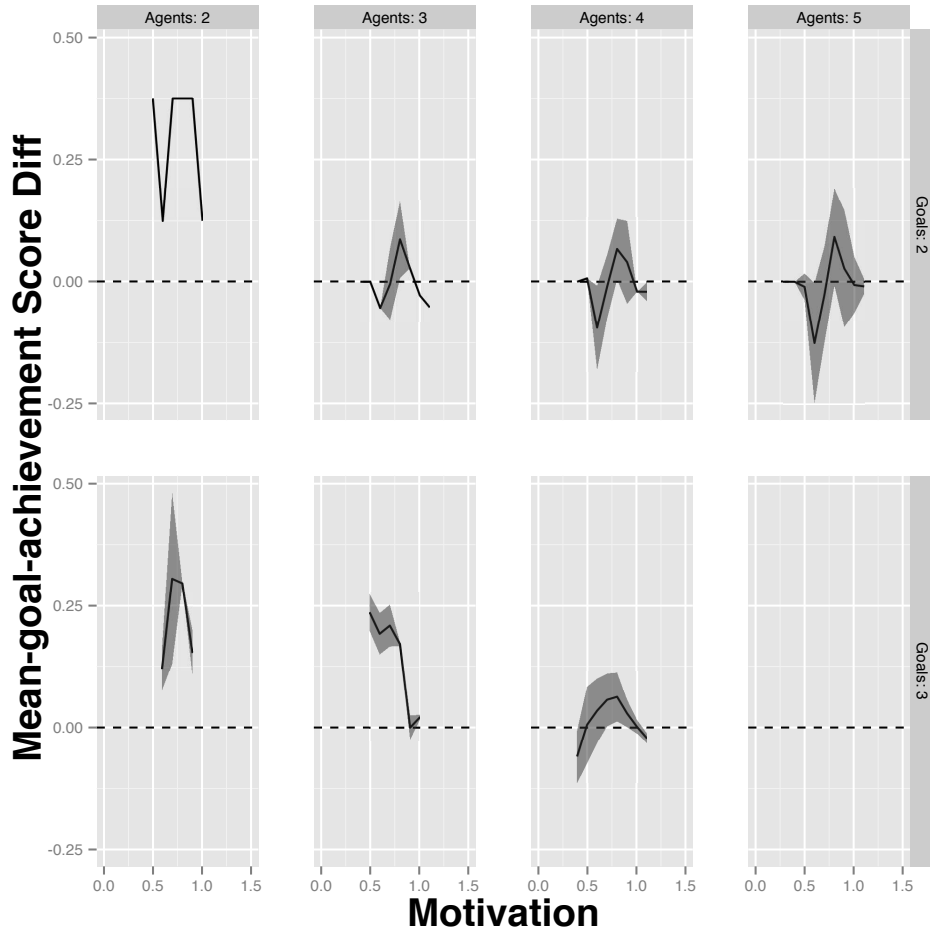}
\caption{Top-performing divergent groups frequently outperform top-performing non-divergent groups, especially if the number of goals is equal to the number of group members. Shown is the difference in mean-goal-achievement (MGA) score of top-performing divergent and non-divergent groups as a function of total group motivation. A positive value indicates, that top-performing divergent groups score higher compared to top-performing non-divergent groups. For the analysis, intervals with a width of 0.1 with respect to a group's mean motivation have been chosen. For each interval, the difference between the MGA score of the top-performing divergent group and the top-performing non-divergent group has been computed. The width of the ribbon is computed as the sum of the median absolute deviations (mad) values of the MGA scores for the compared sets of divergent and non-divergent groups.}
\end{figure}

In addition to the question of what fraction of goals a particular group can be expected to achieve, one can ask the question of how likely it is, that a group reaches all goals. Alternatively one can analyze, how well the goal achievement will be under the challenge that any one group member is replaced by someone who is unmotivated with respect to all goals. The third alternative score is the average goal achievement score under the additional challenge of variable goal difficulty. Scatter plots for the MGA-score and for the three alternative performance scores (termed ALL, DD and VL) are shown in Figures 3-6. One can see that overall tendencies for the alternative scores are not dramatically different compared to the MGA scores depicted in Figures 1.

\begin{figure}

\begin{tabular}{cc}
\includegraphics[width=70mm]{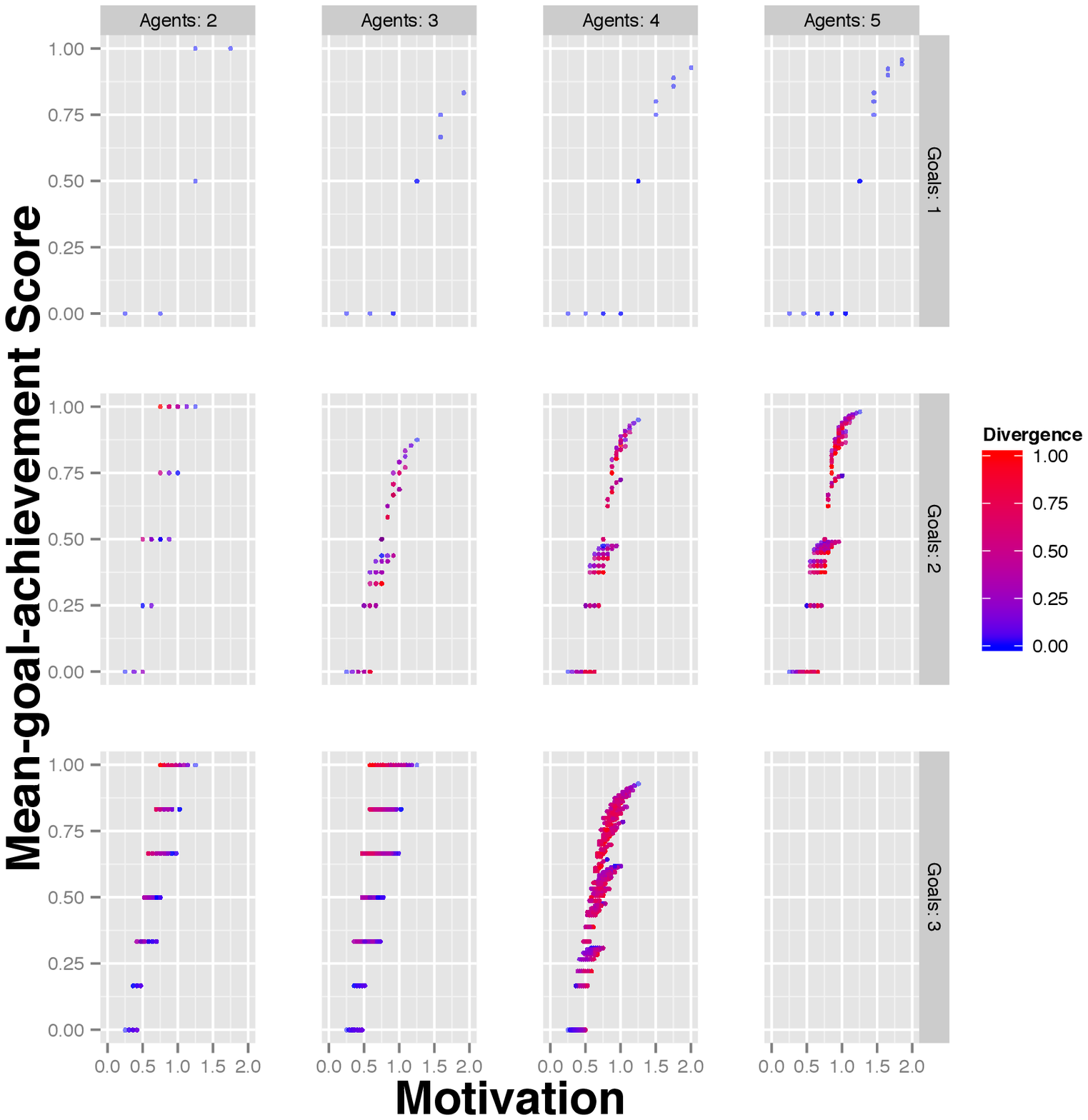} & \includegraphics[width=70mm]{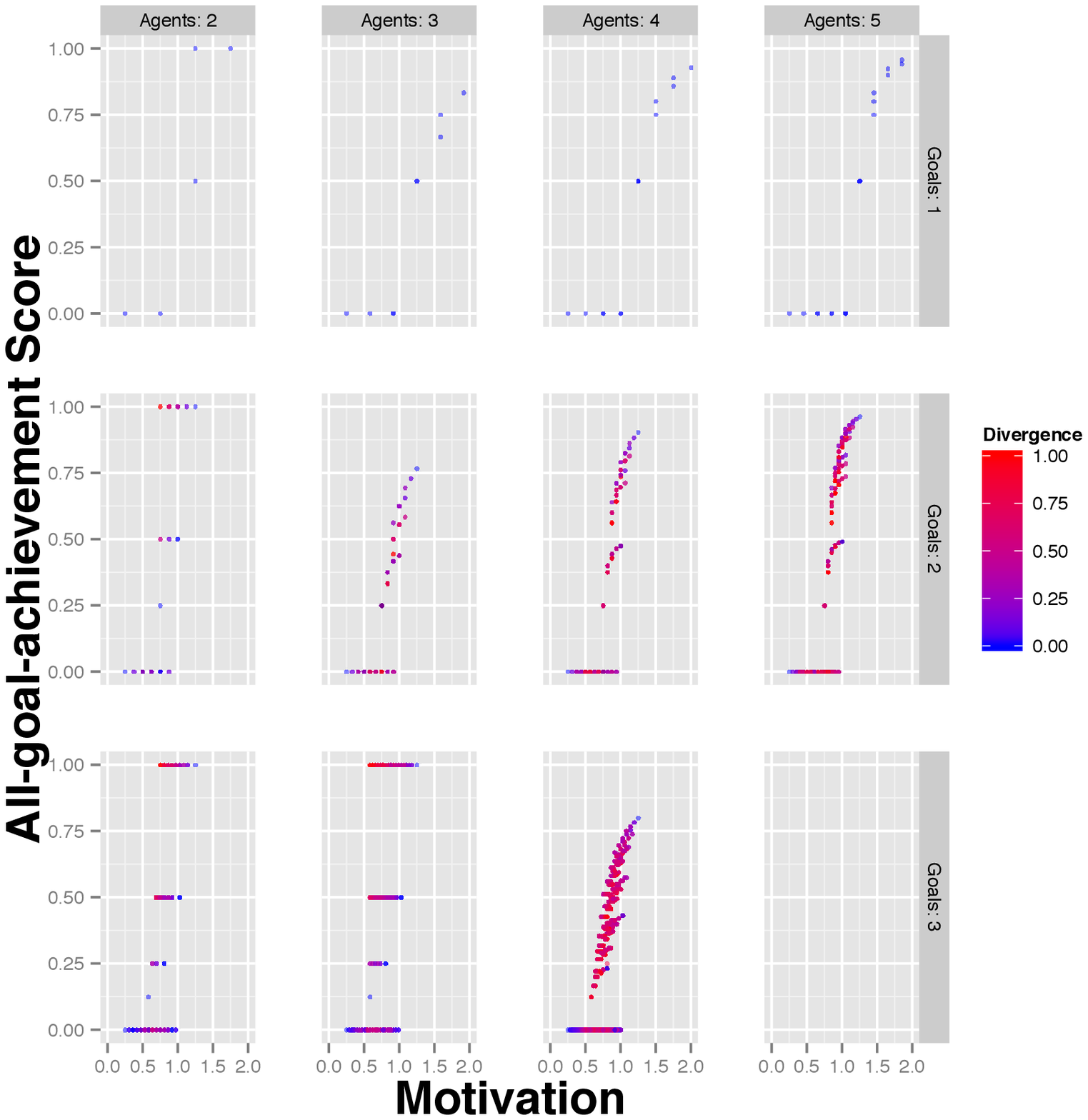} \\

\includegraphics[width=70mm]{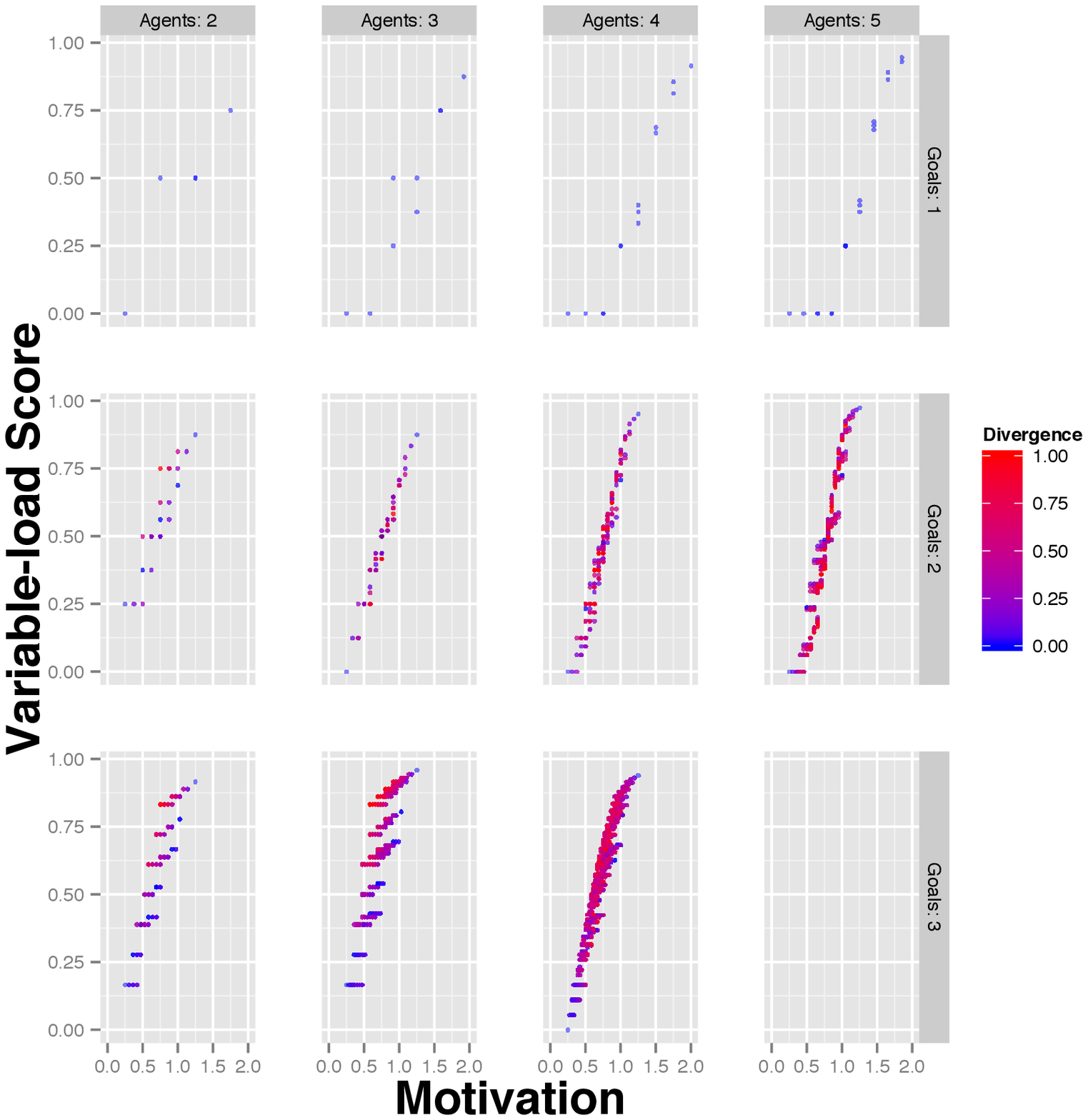} & \includegraphics[width=70mm]{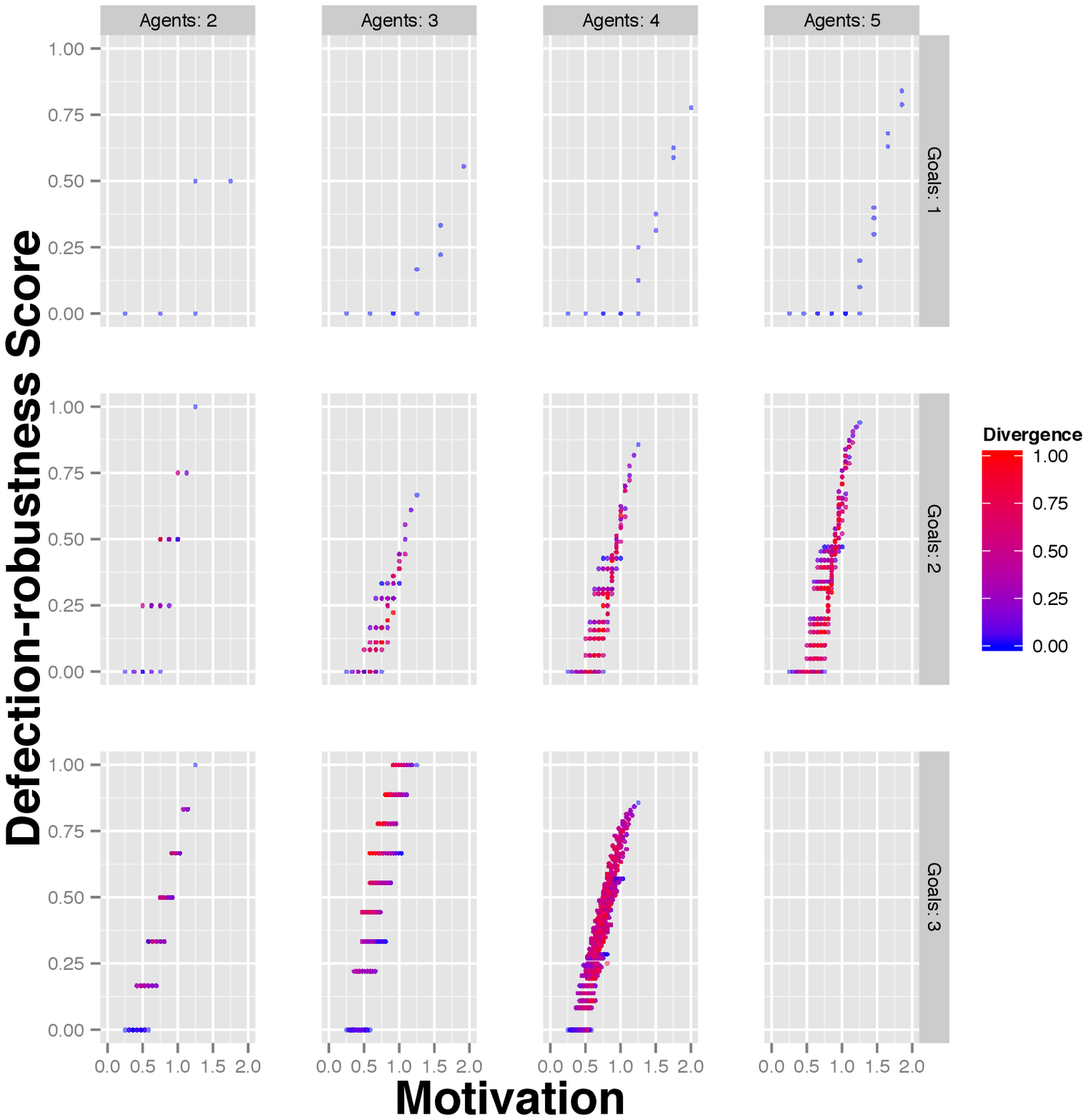} \\

\end{tabular}

\caption{ Shown are different goal achievment scores as a function of mean motivation for groups consisting of two to five members faced with the achievement of one, two or three goals. Top left: mean-goal-achievement score (the fraction of goals that are achieved averaged over the pure-strategy Nash equilibria). Top right: all-goal score (the all-goal score is the average is the fraction of Nash equilibria in which all goals are achieved). Bottom-left: variable-load score (the average in mean goal achievement for all scenarios for which one goal has one unit higher cost of achievement and all scenarios in which one goal has one unit lower cost of achievement). Bottom-right: defection-robustness score (the average in mean goal achievement for all scenarios for which one group member is replaced by an agent who has low motivation to achieve any goals). }
\end{figure}

In Tables 1 and 2, the different score of groups consisting of two and four members respectively that are faced with two goals are listed. Each agent can be one of three types: $A$ stands for a high motivation towards goal one, and a low motivation towards goal 2. $B$ stands for agents who have a high motivation with respect to goal 2 and a low motivation with respect to goal 1. Type $O$ stands for agents, who have a medium motivation with respect to both goals. Shown are the mean-goal-achievement score (MGA), the all-goal-achievement score (ALL), the defection robustness score (DD) and the variable-load-score (VL) as well as the corresponding ranks (indicated as MGAR, ALLR, DDR and VLR respectively). The low DD score of 0.0 for the OOOO group (a group consisting of four ``centrist'' members that have medium motivation towards achieving each of the two goals) indicates that a `centrist` group is doing relatively well under regular conditions (an MGA-score value of 0.5) but is highly vulnerable to the introduction of an agent with low motivation that leads to a break-down of goal achievement. The semi-polarized (AOOB) and polarized (AABB) groups perform well under unchallenged conditions (MGA score rank 1 tied with the OOOO group) but perform also well under conditions in which the achievement of goals is under variable bias (the two groups have the highest-ranking variable load score). Surprisingly well-performing are the highly biased groups AAAA and BBBB. Those groups consists of members that are highly motivated to achieve only the first or only the second goal, respectively. These groups essentially focus on solving only one of the two goals while ignoring the other goal. These highly biased groups are less vulnerable to the introduction of a ``black sheep'' that has low motivation to achieve either goal.

\begin{table}[ht]
\centering
\begin{tabular}{lrrrrrrrrrr}
  \hline
Motivations & MGA & ALL & DD & VL & MGAR & ALLR & DDR & VLR & Wins & Ties \\
  \hline
AB & 1.00 & 1.00 & 0.50 & 0.75 &   1 &   1 &   1 &   1 & 5.00 & 0.00 \\
  BB & 0.50 & 0.00 & 0.50 & 0.56 &   2 &   3 &   1 &   2 & 3.00 & 1.00 \\
  AA & 0.50 & 0.00 & 0.50 & 0.56 &   2 &   3 &   1 &   2 & 3.00 & 1.00 \\
  AO & 0.50 & 0.00 & 0.25 & 0.50 &   2 &   3 &   4 &   4 & 0.00 & 2.00 \\
  OB & 0.50 & 0.00 & 0.25 & 0.50 &   2 &   3 &   4 &   4 & 0.00 & 2.00 \\
  OO & 0.50 & 0.25 & 0.00 & 0.50 &   2 &   2 &   6 &   4 & 0.00 & 2.00 \\
   \hline
\end{tabular}
\caption{2-goal, 2-member achievement game, in which each member can be of type A,B or O, corresponding to motivations towards goals 1 and 2 being $(1+\delta, \delta), (\delta, 1+\delta), (0.5+\delta, 0.5+\delta)$ respectively (with $\delta=0.25$). MGA: the score assigned to a group is the fraction of achieved goals averaged over the equilibrium points. ALL: number of equilibria in which all goals are achieved divided by the total number of equilibrium points; DD: the MGA score, averaged over groups in which the motivation of one group member is set to $(\delta, \delta)$. VL: the average of the MGA score for which the threshold to achieve one goal is one unit higher or one unit lower; MGAR,ALLR,DDR,VLR: the rank of group among the listed groups with respect to the MGA,ALL,DD or VL score, respectively. Tied ranks are replaced with the minimum tied rank. Wins: number of cases, in which a group has better ranks with respect to more scores compared to another group. Ties: number of cases, in which a group has tied score ranks compared to another group. One can see that group AB is outperforming (or tied with) the other groups with respect to all scores. In other words, groups of two, in which one member is highly motivated with respect to one goal, while the other group member is highly motivated with respect to the other goal are more likely to achieve both goals (ALL score of 1.0) compared to the other listed groups.}
\end{table}

\begin{table}[ht]
\centering
\begin{tabular}{lrrrrrrrrrr}
  \hline
Motivations & MGA & ALL & DD & VL & MGAR & ALLR & DDR & VLR & Wins & Ties \\
  \hline
AABB & 0.50 & 0.25 & 0.25 & 0.50 &   1 &   1 &   9 &   3 & 12.00 & 2.00 \\
  AOOB & 0.50 & 0.25 & 0.12 & 0.50 &   1 &   1 &  14 &   3 & 11.00 & 2.00 \\
  AAAA & 0.47 & 0.00 & 0.43 & 0.48 &   4 &   4 &   1 &   5 & 10.00 & 2.00 \\
  BBBB & 0.47 & 0.00 & 0.43 & 0.48 &   4 &   4 &   1 &   5 & 10.00 & 2.00 \\
  OBBB & 0.46 & 0.00 & 0.39 & 0.46 &   6 &   4 &   3 &  10 & 8.00 & 1.00 \\
  AAAO & 0.46 & 0.00 & 0.39 & 0.46 &   6 &   4 &   3 &  10 & 8.00 & 1.00 \\
  OOOO & 0.50 & 0.25 & 0.00 & 0.46 &   1 &   1 &  15 &   9 & 6.00 & 6.00 \\
  AAOO & 0.44 & 0.00 & 0.31 & 0.47 &   8 &   4 &   5 &   7 & 6.00 & 2.00 \\
  OOBB & 0.44 & 0.00 & 0.31 & 0.47 &   8 &   4 &   5 &   7 & 6.00 & 2.00 \\
  ABBB & 0.43 & 0.00 & 0.29 & 0.53 &  10 &   4 &   7 &   1 & 4.00 & 4.00 \\
  AAAB & 0.43 & 0.00 & 0.29 & 0.53 &  10 &   4 &   7 &   1 & 4.00 & 4.00 \\
  OOOB & 0.40 & 0.00 & 0.19 & 0.42 &  12 &   4 &  10 &  14 & 2.00 & 1.00 \\
  AOOO & 0.40 & 0.00 & 0.19 & 0.42 &  12 &   4 &  10 &  14 & 2.00 & 1.00 \\
  AOBB & 0.38 & 0.00 & 0.16 & 0.44 &  14 &   4 &  12 &  12 & 0.00 & 1.00 \\
  AAOB & 0.38 & 0.00 & 0.16 & 0.44 &  14 &   4 &  12 &  12 & 0.00 & 1.00 \\
   \hline
\end{tabular}
\caption{Performance scores of groups with 4 members that are faced with the challenge of achieving two goals (called goal 1 and 2). Each group member can be of type A,B or O corresponding to a high (low), low (high) or medium (medium) motivation with respect to goal 1 (2). The score definitions are described in the caption of Figure 1. One can see, that the (semi)-``polarized'' groups AABB and AOOB score well in comparison to the ``balanced'' group OOOO. Surprisingly well perform groups AAAA and BBBB that essentially achieve one goal reliably while ignoring the other goal.} 
\end{table}

\begin{table}[ht]
\centering
\begin{tabular}{lrrrrrrrrrr}
  \hline
Motivations & MGA & ALL & DD & VL & MGAR & ALLR & DDR & VLR & Wins & Ties \\
  \hline
AOB & 0.50 & 0.25 & 0.17 & 0.50 &   1 &   1 &   5 &   3 & 7.00 & 2.00 \\
  OOO & 0.50 & 0.25 & 0.00 & 0.50 &   1 &   1 &  10 &   3 & 6.00 & 2.00 \\
  AAA & 0.44 & 0.00 & 0.33 & 0.44 &   3 &   3 &   1 &   5 & 6.00 & 1.00 \\
  BBB & 0.44 & 0.00 & 0.33 & 0.44 &   3 &   3 &   1 &   5 & 6.00 & 1.00 \\
  OBB & 0.42 & 0.00 & 0.28 & 0.52 &   5 &   3 &   3 &   1 & 4.00 & 3.00 \\
  AAO & 0.42 & 0.00 & 0.28 & 0.52 &   5 &   3 &   3 &   1 & 4.00 & 3.00 \\
  AOO & 0.38 & 0.00 & 0.17 & 0.44 &   7 &   3 &   5 &   5 & 2.00 & 1.00 \\
  OOB & 0.38 & 0.00 & 0.17 & 0.44 &   7 &   3 &   5 &   5 & 2.00 & 1.00 \\
  ABB & 0.33 & 0.00 & 0.11 & 0.42 &   9 &   3 &   8 &   9 & 0.00 & 1.00 \\
  AAB & 0.33 & 0.00 & 0.11 & 0.42 &   9 &   3 &   8 &   9 & 0.00 & 1.00 \\
   \hline
\end{tabular}
\caption{Performance of groups of 3 members faced with a challenge of achieving two goals and having equal total motivations.}
\end{table}

\section{Discussion}

The obtained results demonstrate that the ranking of a group with respect to goal achievement depends on the utilized scoring method as well as on the influence of additional challenges (an additional-defection challenge and a variable-load challenge were examined). The results suggests, that (partial) polarization of a group can have advantages for robust, compartmentalized problem-solving of a group. This is particularly interesting for the achievement of goals, for which only a minority is motivated to spend efforts for its solution. In such cases, previously reported approaches for reaching cooperation (such as coercion, kin-selection, costly punishment, reputation, reciprocity etc.) do not easily apply \citep{Fehr1999,Fehr2000,Dohmen2006,Panchanathan2004,Fowler2005}. Indeed, such mechanisms favoring behavioral similarity may work \textit{against} someone who is attempting to exert individual efforts towards achieving a non-profit goal that is considered important only by a minority.

These results were obtained under the simplifying assumption, that someone who is highly biased towards one goal does not attempt to hinder attempts by others to achieve a different goal. Also, one should be guarded to extrapolate the obtained results for groups with up to five members to groups of dramatically larger sizes. Furthermore, we know through reported experimental results that groups do not necessarily choose the ``best'' equilibrium point, nor do they choose equilibrium points with equal probability \citep{VanHuyck1990}.

The results make it, despite the conceptual shortcomings, tempting to speculate that the ubiquitous phenomenon of a left-wing/right-wing political spectrum can be understood as a special case in which the two commons to be maintained by a society are a social network (helping others) and a social shield (protecting from others). This situation is called here the Great Tale of Two Commons (GTTC). The theoretical results suggest that groups whose political spectrum is narrow (only moderates) are less engaged in voluntary efforts compared to semi-polarized groups that partially consist of participants who are biased towards one or the other goal.

This is a different paradigm compared to the model of Dixit and Weibull, in which political polarization is a result of a learning process and a consequence of failed policies \citep{Dixit2007}. The GTTC and the Dixit and Weibull model may be falsifiable: Years of successful policies in the light of relatively static demands on a society would lead in the GTTC model to maintained polarization while the Dixit and Weibull model predicts convergence. 

This strategy of compartmentalizing group-goals can be extended by adding further goals and respective member-specific motivations thus leading to higher-dimensional achievement games, including the extreme case of an achievement game in which the number of goals is equal to the number of group members. The one and only equilibrium reached in the case in which each group member is uniquely motivated to reach one particular goal is a fascinating case of goal achievement: because the chosen actions of the group members are different, it is an example of ``anti-coordination''. As can be seen by the results for groups of three members faced with the challenge of achieving two goals (Table 3), the three strategies of coordination, anti-coordination and prioritization (the ignorance of goals that are perceived to be less important) are all performing quite well. Top-performing larger groups can perform combinations of these strategies, as exemplified by the ``AABB'' group of four members who succeed in achieving two goals by two members focusing on the first goal and the two other members focusing on the second goal (see Table 2). This scenario has similarity to congestion or crowding games, in which agents attempt to avoid making similar choices, that could lead to the over-use of any one strategy \citep{rosenthal1973class}. In other words, one strategy to cope with the difficulty of coordination games may be to ``change the game'' such that the need for coordination is minimized.

Another example where a related strategy of problem-solving via high-dimensional anti-coordination appears to be used is parenting, whereby parents commit all necessary resources towards the upbringing of their own progeny, will contributing comparatively little to the upbringing of those who are not part of their family.

Applying this approach to the environmental situation, one needs to remind oneself of survey results that consistently show that environmental goals are viewed by the majority as less important as, for example, economic considerations \citep{upham2009public}. If goal 1 stands for a goal of ``economic growth'' and goal 2 stands for a goal of ``environmental protection'', the representative 4-member groups from Table 2 would thus be called ``AAAA'', ``AAAO'' or ``AAAB'', in other words groups consisting mostly of members who would give economic growth priority over environmental protection. Such groups perform relatively well by succeeding reliably in achieving the prioritized goal while ignoring the goal viewed as less important (the ALL score of Table 2 shows that the fraction of equilibrium points in which both goals are achieved is in all three cases zero). The opportunity suggested by the research presented in this paper is now to use a less ambitious second goal that can be achieved by the minority (``B-agents''), for which the second goal has high priority. Just like voluntary firefighters solve a minority problem (firefighting) by taking on their own shoulders the burden of solving the problem they feel strongly about even though they did not cause it. This mode of environmental protection would suggest a new type of environmental protectionists, who solve environmental problems by personally taking on the burden of solving it. Such an approach would be fundamentally different and yet complementary to current ``mainstream'' environmental approaches that attempt to convince all participants to give economic and environmental considerations similar priority (which applied to Table 2 would be, if successful, leading to ``OOOO'' groups). Such novel approaches could be combined with other recent innovations, such as the augmentation to the ``reduce, reuse, recycle'' paradigm in order to reduce rebound effects \citep{Restore}.

The mechanism proposed here corresponds to an incentive for altruistic behavior in terms of member's subjective utility function, not in terms of their objective burden. The results suggest that one of the strategies for achieving goals in a population may be to have a ``noisiness'' of motivations of its members towards multiple objectives, such that for each challenge a population faces, a subset of highly motivated members voluntarily emerges that personally takes on the burden of solving the challenge at hand. The results suggest, that differences in motivations and priorities with respect to life's various challenges we might have with someone, are indeed an opportunity and the result of Nature's approximate solution to escape mutual defection equilibria.

\bibliographystyle{unsrt}
\bibliography{heroic}		

\end{document}